\documentclass{llncs} 

\usepackage{amssymb}
\usepackage{textcomp}
\usepackage{amsmath}
\usepackage{algorithm}
\usepackage{algpseudocode}
\usepackage{comment}
\usepackage{graphicx}
\usepackage{color}
\pagestyle{plain}

\setcounter{page}{1}

\newcommand{\Endproof}{\hfill$\Box$\\}

\newcommand{\ket}[1]{|#1\rangle}

\begin{document}

\title{Quantum Property Testing Algorithm for the Concatenation of Two Palindromes Language}
\author{Kamil Khadiev\inst{1} \and
Danil Serov\inst{1}}
\authorrunning{K. Khadiev and D. Serov}
%
\institute{Institute of Computational Mathematics and Information Technologies, Kazan Federal University, Kazan, Russia\\
 \email{kamilhadi@gmail.com} }

\maketitle

\begin{abstract}
In this paper, we present a quantum property testing algorithm for recognizing a context-free language that is a concatenation of two palindromes $L_{REV}$. The query complexity of our algorithm is $O(\frac{1}{\varepsilon}n^{1/3}\log n)$, where $n$ is the length of an input. It is better than the classical complexity that is $\Theta^*(\sqrt{n})$. 

At the same time, in the general setting, the picture is different a little. Classical query complexity is $\Theta(n)$, and quantum query complexity is $\Theta^*(\sqrt{n})$. So, we obtain polynomial speed-up for both cases (general and property testing). 

\textbf{Keywords:} palindromes, property testing, strings, quantum algorithms, query complexity, context-free languages
\end{abstract}

\section{Introduction}
In this paper, we investigate property testing \cite{rs96,ggr98} that is a relaxation of the standard notion of a decision problem. A property testing algorithm distinguishes between inputs with a certain property and inputs that are far from any input that has the property. By ``far'' we mean a large Hemming distance. More specifically, for a given property $\alpha$, a testing algorithm should accept an input that has the property, and
reject if the input is $\varepsilon$-far from any input with the property. Here $\varepsilon$-far means that the Hemming distance is at least $\varepsilon\cdot n$, where $n$ is the length of the input. In \cite{r2001,f2004}, one can find examples of testing algorithms whose query complexity is sublinear or independent of the
input size.

Researchers investigate formal languages with respect to the property testing. Alon, Krivelevich, Newman, and Szegedy \cite{akns2001} presented a property testing algorithm for any regular language $L$ with query complexity $O^*(1/\varepsilon)$ that does not depend on the input size. Here $O^*$ hides constant and log factors. Newman \cite{n2000} extended this result to properties defined
by bounded-width branching programs. 

At the same time, Alon et al. \cite{akns2001} showed that the situation for context-free languages is completely different. Context-free languages are not testable even in time square root in the input size. As an example, the context-free language $L_{REV}=\{uu^rvv^r: u,v\in \Sigma^*\}$ that is a concatenation of two palindromes, where $\Sigma$ is a finite size alphabet (binary as an example), and $u^r$ is  a reverse of $u$. For the language, they proved $\Omega(\sqrt{n})$ lower bound for query complexity, where $n$ is a length of an input. Parnas, Ron, and Rubinfeld presented a property testing algorithm that almost reaches this lower bound. Its  query complexity is $O\left(\frac{1}{\varepsilon}\sqrt{n}\log n\right)$.

Buhrman, Fortnow, Newman, and R{\"o}hrig \cite{bfnr2008} introduced quantum property testing. They developed quantum property testing algorithms for some problems that are  better than classical counterparts in terms of query complexity. A nice survey on quantum property testing can be found in \cite{md2016}. At the same time, context-free languages like $L_{REV}$ were not considered. We are interested in developing a quantum property testing algorithm for a context-free language that is better than the classical lower bound.

There are many examples of quantum algorithms \cite{nc2010,a2017,k2022lecturenotes,aazksw2019part1} that are faster than classical counterparts  \cite{dw2001,quantumzoo} in the general setting (not property testing). 
Problems for strings are examples of such problems \cite{aaksv2022,aj2021,kkmsy2022,ki2019,kbcw2024,ke2022,kiv2022,kk2021,kb2022,kr2021b,kr2021a,kszm2022,l2020,l2020conf,m2017}.

A new interest in recognizing formal languages, including context-free languages, is started from the paper of Aaronson, Grier, and Schaeffer \cite{ags2019}. Dyck language was investigated by different researches \cite{abikkpssv2020,abikkpssv2023,bps2019,kk2021}. Other formal languages were explored in papers \cite{bfk2015,ckksw2022}. 

In this paper, we present a quantum property testing algorithm for recognizing $L_{REV}$ language that has $O(\frac{1}{\varepsilon}n^{1/3}\log n)$ query complexity. It shows quantum speed-up and it is better than the classical lower bound $\Omega(\sqrt{n})$. For this result, we use the meet-in-the-middle technique and Grover's search algorithm \cite{g96,bbht98}.

At the same time, in the general setting (not a property testing algorithm), we show that the problem has $\Theta(n)$ classical query complexity; and $\Theta^*(\sqrt{n})$ quantum query complexity. We present a quantum lower bound $\Omega(\sqrt{n})$, and a quantum algorithm with query complexity $O\left(\sqrt{n}(\log n)^2\right)$. So, we obtain almost quadratic speed-up.
We see that in the general setting, the classical lower bound differs from the property testing setting. At the same time, we see quantum speed-up for both cases.

The structure of this paper is the following.
Section \ref{sec:prelims} describes some conventional notions for quantum computation.
Section \ref{sec:algo-gen} provides quantum and classical algorithms and lower bounds for general setting. The quantum property testing algorithm is given in
subsection \ref{sec:algo}. The final Section \ref{sec:concl} concludes the paper and contains open questions. 

\section{Preliminaries}\label{sec:prelims}
For a string $u=(u_1,\dots,u_M)$, let $|u|=M$ be a length of the string, and let $u^r=(u_M,\dots,u_1)$ be the reverse of the string $u$.

Let us formally define the Two Palindromes Concatenation problem.

Suppose $\Sigma$ is a finite-size alphabet. Let $L_{REV}=\{uu^rvv^r: u,v\in\Sigma^*\}$ be a language of concatenations for two palindromes. We assume that in the definition of  $L_{REV}$  $u$ and $v$ are not empty strings.  For simplicity, in the paper, we assume that the alphabet is binary, $\Sigma=\{0,1\}$. At the same time, all results are correct for any finite-size alphabet.

For an integer $n>0$, let the function $\texttt{REV}_{n}:\Sigma^n\to\{0,1\}$ be such that $\texttt{REV}_{n}(x)=1$ iff $x\in L_{REV}$.  

For an integer $n>0$ and a non-negative $\varepsilon<1$, let $\texttt{REV}_{n,\varepsilon}$ be a property testing problem such that for an input $x=(x_0,\dots,x_{n-1})$ there is a promise that if $x$ is not in $L_REV$, then $x$ is at least $\varepsilon \cdot n$ far from the closest word from $L_{REV}$. Formally, if $x$ is not in $L_REV$, then for any $u\in L_{REV}$ we have $|\{i:x_i\neq u_i\}|\geq \varepsilon \cdot n$.


In the paper, we use a {\em trie} (prefix tree) data structure \cite{d59,b98,b2008,knuth73}.
It is a tree that allows us to add a string $s$  and check whether $s$ is in the tree with running time $O(|s|)$. The data structure implements a ``set of strings'' data structure. Let us have the following operations with a trie $T$:
\begin{itemize}
    \item $\textsc{InitTrie()}$ returns an empty trie. The running time of the operation is $O(1)$.
    \item $\textsc{AddToTrie(T,s)}$ adds a string $s$ to the trie $T$. The running time of the operation is $O(|s|)$.
    \item $\textsc{Contains(T,s)}$ returns $1$ if a string $s$ belongs to the trie $T$, and $False$ otherwise. The running time of the operation is $O(|s|)$.
\end{itemize}

\subsection{Quantum query model}
One of the most popular computation models for quantum algorithms is the query model.
We use the standard form of the quantum query model. 
Let $f:D\rightarrow \{0,1\},D\subseteq \{0,1\}^M$ be an $M$ variable function. Our goal is to compute it on an input $x\in D$. We are given oracle access to the input $x$, i.e. it is implemented by a specific unitary transformation usually defined as $\ket{i}\ket{z}\ket{w}\mapsto \ket{i}\ket{z+x_i\pmod{2}}\ket{w}$, where the $\ket{i}$ register indicates the index of the variable we are querying, $\ket{z}$ is the output register, and $\ket{w}$ is some auxiliary work-space. An algorithm in the query model consists of alternating applications of arbitrary unitaries which are independent of the input and the query unitary, and a measurement at the end. The smallest number of queries for an algorithm that outputs $f(x)$ with probability $\geq \frac{2}{3}$ on all $x$ is called the quantum query complexity of the function $f$ and is denoted by $Q(f)$.
We refer the readers to \cite{nc2010,a2017,aazksw2019part1,k2022lecturenotes} for more details on quantum computing. 

In this paper, we are interested in the query complexity of the quantum algorithms. We use modifications of Grover's search algorithm \cite{g96,bbht98} as quantum subroutines. For these subroutines, time complexity can be obtained from query complexity by multiplication to a log factor \cite{ad2017,g2002}.   
\section{The General Case}\label{sec:algo-gen}
In this section, we consider the $\texttt{REV}_{n}$ problem. Here we show quantum upper and lower bounds that are almost equal up to log factors.
\subsection{Quantum and Classical Algorithms}
Let us start with the upper bound.

Firstly, let us show one useful property. For the input string $x=(x_0,\dots,x_{n-1})$, let $\bar{x}=(x_1,\dots,x_{n-1})$ be the string $x$ without the first symbol. Let $\hat{x}=(x_0,\dots,x_{n-2})$ be the string $x$ without the last symbol.  Let 
\[y(x)=\bar{x}\circ \hat{x}=(x_1,\dots,x_{n-1},x_0,\dots,x_{n-2}),\] 
where $\circ$ is the concatenation operation. Then, we have the following result
\begin{lemma}\label{lm:substr}
A string $x\in L_{REV}$ if and only if $y(x)$ contains $x^r$ as a substring.
\end{lemma}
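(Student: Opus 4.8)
The plan is to prove both directions by directly unpacking what it means for $x^r$ to occur as a substring of $y(x) = \bar x \circ \hat x$, which is a string of length $2n-2$, and comparing that to the structural condition $x = uu^rvv^r$.

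First I would handle the forward direction. Suppose $x \in L_{REV}$, so $x = uu^rvv^r$ with $u,v$ nonempty; write $a = |u|$ and $b = |v|$, so $a+b = n/2$ and in particular $n$ is even. Then $x^r = (uu^rvv^r)^r = vv^ruu^r$. I want to locate $vv^ruu^r$ inside $y(x)=(x_1,\dots,x_{n-1},x_0,\dots,x_{n-2})$. The natural guess is that $x^r$ sits at the position obtained by cyclically rotating $x$ by one symbol: indeed $y(x)$ is exactly the string whose length-$n$ windows are the cyclic rotations of $x$ (the window starting at position $i$, for $0 \le i \le n-1$, equals $x_i x_{i+1}\cdots x_{n-1}x_0\cdots x_{i-1}$). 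So the claim ``$y(x)$ contains $x^r$'' is equivalent to ``$x^r$ is a cyclic rotation of $x$.'' Now $x = uu^rvv^r$ and $x^r = vv^ruu^r$, and $vv^ruu^r$ is precisely the cyclic rotation of $uu^rvv^r$ by $2a = |uu^r|$ positions. Hence $x^r$ is a cyclic rotation of $x$, so it appears as a length-$n$ window in $y(x)$, giving the forward implication.

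For the converse, suppose $y(x)$ contains $x^r$ as a substring. Since $|y(x)| = 2n-2 < 2n$, any occurrence of the length-$n$ string $x^r$ starts at some position $i$ with $0 \le i \le n-2$; by the window description above, this means $x^r$ equals the cyclic rotation of $x$ by $i$ positions, i.e. $x_{(j+i) \bmod n} = x^r_j = x_{n-1-j}$ for all $j$. Reading this as a statement about the infinite periodic extension of $x$, it says $x$ read forward from offset $i$ equals $x$ read backward; equivalently, writing $x = x[0:i] \circ x[i:n]$ (so $x^r = x[i:n]^r \circ x[0:i]^r$), the occurrence forces $x[0:i] = x[0:i]^r$ and $x[i:n] = x[i:n]^r$ — that is, $x$ splits as a concatenation of two palindromes $u u^r v v^r$ with the split point at $i$. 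The one caveat is the nonemptiness requirement in the definition of $L_{REV}$: I need $i \notin \{0, n\}$. The value $i=n$ is already excluded since $i \le n-2$; the value $i = 0$ would say $x$ itself is a palindrome, and I would argue separately that a palindrome of even length is still a concatenation of two nonempty palindromes (split it in the middle, or note $x = x[0{:}1]\,\cdot\,$rest), handling the trivial edge cases so that the ``$u,v$ nonempty'' convention is met.

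The main obstacle I anticipate is precisely the bookkeeping around the window/rotation correspondence and the nonemptiness convention: making airtight the claim that an occurrence of $x^r$ at position $i$ in $\bar x \circ \hat x$ forces both $x[0:i]$ and $x[i:n]$ to be palindromes (as opposed to some looser periodicity statement), and then checking that $i$ can always be taken strictly between $0$ and $n$ — or, when it cannot, that $x$ is still in $L_{REV}$ by the evenness of $n$. Once the rotation picture is set up cleanly, both implications are short; the risk is entirely in the off-by-one indices and the boundary cases, so I would state the window description as a small preliminary observation and reuse it for both directions.
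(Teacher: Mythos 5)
Your central preliminary claim --- that the length-$n$ windows of $y(x)=\bar x\circ\hat x$ are \emph{all} cyclic rotations of $x$, the window at position $i$ being the rotation by $i$ for $0\le i\le n-1$ --- is off by one, and the error is not cosmetic. Since $|y(x)|=2n-2$, windows start only at positions $0,\dots,n-2$, and because $y(x)_i=x_{i+1}$ for $i<n-1$, the window at position $i$ is the rotation of $x$ by $i+1$. So the windows are exactly the $n-1$ \emph{nontrivial} rotations of $x$; the trivial rotation is excluded, and this exclusion is precisely why $y(x)$ is built by deleting the first and last symbols rather than taking $x\circ x$. Hence ``$y(x)$ contains $x^r$'' is equivalent to ``$x^r$ equals the rotation of $x$ by some $k$ with $1\le k\le n-1$,'' not to ``$x^r$ is a cyclic rotation of $x$.'' The distinction bites exactly in your converse direction: take $x=0110$. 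It is a palindrome, so $x^r=x$ is the rotation by $0$, yet $y(x)=110011$ does not contain $0110$ --- consistent with the lemma, since $0110\notin L_{REV}$ (none of the splits $0\,|\,110$, $01\,|\,10$, $011\,|\,0$ gives two palindromes). Moreover, the patch you propose for your $i=0$ case is itself false: an even-length palindrome need not be a concatenation of two nonempty palindromes, the same $0110$ being a counterexample --- splitting $ww^r$ in the middle gives $w$ and $w^r$, which are generally not palindromes, and peeling off the first letter leaves $110$, also not a palindrome.

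Once the indexing is repaired, the case you were worried about never arises and your argument collapses onto the paper's proof: an occurrence of $x^r$ at position $i\in\{0,\dots,n-2\}$ of $y(x)$ gives $x_{(j+i+1)\bmod n}=x_{n-1-j}$ for all $j$, which is exactly the paper's symbol-chasing conclusion that $(x_0,\dots,x_i)$ and $(x_{i+1},\dots,x_{n-1})$ are palindromes, with split point $i+1\in\{1,\dots,n-1\}$; the forward direction is likewise the paper's identity $y(x)=\bar u u^r\circ x^r\circ v\bar v^r$, i.e.\ that $x^r=vv^ruu^r$ is the rotation by $2|u|\in\{2,\dots,n-2\}$, which is automatically nontrivial because $u$ and $v$ are nonempty. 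So the corrected proposal is not a different route but the same argument; as written, though, the window claim and the ``even palindrome splits into two nonempty palindromes'' claim are genuine errors rather than deferred bookkeeping. (You do share with the paper the silent step from ``two palindromes'' to the form $uu^rvv^r$ without addressing why the two pieces have even length; since the paper's own write-up makes the same move, I flag it only in passing.)
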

\begin{proof}
    Assume that $x\in L_{REV}$. It means that we can find two strings $u$ and $v$ such that $x=uu^rvv^r$. Note that $\hat{v^r} = \bar{v}^r$ which means we can either remove the last symbol of a reversed string, or remove the first symbol of the original string and then reverse. Hence, the string $y(x)$ has the following form
    \[y(x)=\bar{u}u^rvv^ruu^rv\bar{v}^r\]
    We can see that it has $x^r=vv^ruu^r$ as a substring because $y(x)=\bar{u}u^r
    \circ x^r \circ v\bar{v}^r$.

    Assume that $x^r$ is a substring of $y(x)$. Let $n=|x|$, and $k=|y(x)|=2n-2$. Assume that $x^r$ starts in $y(x)$ from a position $i\leq n-1$. It means
    that $y(x)_i=(x^r)_0$. At the same time, $(x^r)_0=x_{n-1}$ because $x^r$ is the reverse of $x$. We also can say, that $y(x)_i=x_{i+1}$.
    So, we have
    \[x_{i+1}=y(x)_{i}=(x^r)_0=x_{n-1},\] 
    \[x_{i+2}=y(x)_{i+1}=(x^r)_1=x_{n-2},\] 
    \[x_{i+3}=y(x)_{i+2}=(x^r)_2=x_{n-3},\]
    \[...\]
    \[x_{n-1}=y(x)_{n-2}=(x^r)_{n-i-2}=x_{i+1}\]
    Therefore, $(x_{i+1},\dots,x_{n-1})$ is a palindrome. Let it be $vv^r$ for some string $v$ (See Figure \ref{fig:palv}).

    \begin{figure}
    \centering
    \includegraphics[width=\textwidth]{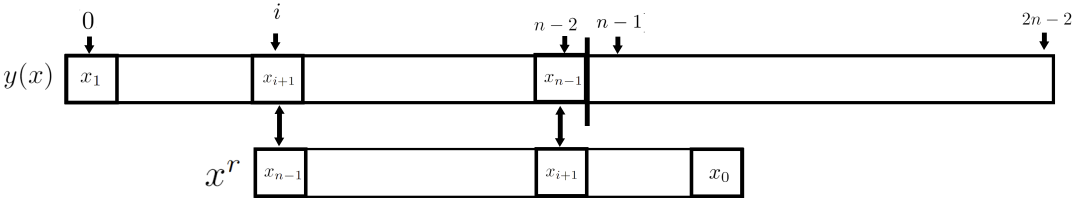}
    \caption{The string $x^r$ is a substring of $y(x)$ and it starts from position $i$. We can see that $(x_{i+1},\dots,x_{n-1})$ is a palindrome.}
    \label{fig:palv}
\end{figure}
    
     Since $x^r$ starts from the $i$-th symbol of $y(x)$, we have $y(x)_{n-1}=(x^r)_{n-i-1}$. At the same time, $y(x)_{n-1}=x_0$; and $(x^r)_{n-i-1}=x_{i}$.  So, we have
    \[x_0=y(x)_{n-1}=(x^r)_{n-i-1}=x_{i},\] 
    \[x_1=y(x)_{n}=(x^r)_{n-i}=x_{i-1},\]
    \[x_2=y(x)_{n+1}=(x^r)_{n-i+1}=x_{i-2},\]
    \[...\]
    \[x_{i}=y(x)_{n+i-1}=(x^r)_{n-1}=x_{0}\]
    Therefore, $(x_0,\dots,x_{i})$ is a palindrome. Let it be $uu^r$  for some string $u$ (See Figure \ref{fig:palu}).
    \begin{figure}
    \centering
    \includegraphics[width=\textwidth]{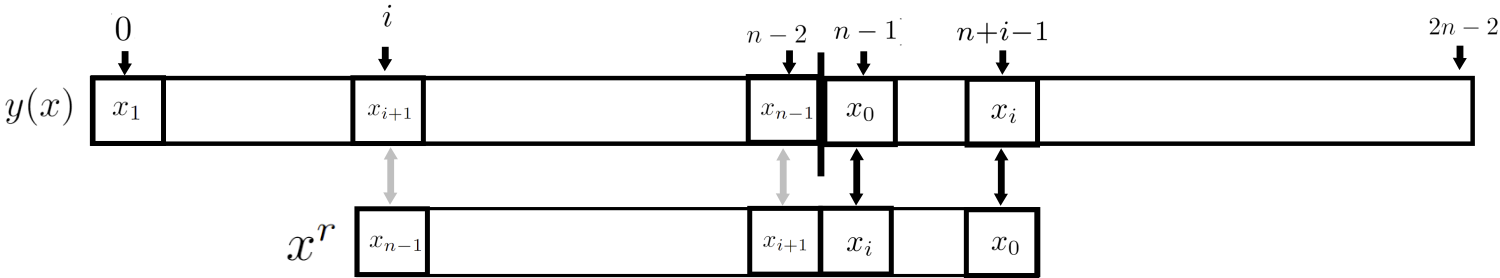}
    \caption{The string $x^r$ is a substring of $y(x)$ and it starts from position $i$. We can see that $(x_{0},\dots,x_{i})$ is a palindrome.}
    \label{fig:palu}
\end{figure}

    So, we can say that $x=(x_0,\dots,x_{i})\circ (x_{i+1},\dots,x_{n-1})=uu^rvv^r\in L_{REV}$.
\Endproof
\end{proof}

In fact, we do not construct $y(x)$. To  access the symbol $y(x)_i$,  we use a function $\textsc{Get}(i)$ that returns $x_{i+1}$ if $i<n-1$, and returns $x_{i-n+1}$ if $\geq n-1$. The complexity of this function is $O(1)$ if we use array-like data structures, but not Linked List data structure.

In the classical case, the substring problem can be solved using the Knuth–Morris–Pratt algorithm \cite{cormen2001,kmp77}. The complexity of the algorithm is $O(k+n)$, where $n=|x^r|$, and $k=|y(x)|=2n-2$. So, the complexity is $O(n)$.

In the quantum case, we can solve the problem using the Ramesh-Vinay algorithm \cite{rv2003}.  The complexity of the algorithm is $O\left(\sqrt{k}\log{\sqrt{\frac{k}{n}}}\log n + \sqrt{n}(\log n)^2\right)$. 

The final complexity of both algorithms is presented in the next theorem.
\begin{theorem}
    For a positive integer $n$ and the problem $\texttt{REV}_n$, there is a classical algorithm that works with query and time complexity $O(n)$; and a quantum algorithm that works with query complexity $O\left(\sqrt{n}(\log n)^2\right)$ and two-side error probability strictly less than $0.5$.
\end{theorem}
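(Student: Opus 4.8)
The plan is to derive the theorem directly from Lemma~\ref{lm:substr}, which reduces membership $x \in L_{REV}$ to the question of whether the pattern $x^r$ (of length $n$) occurs as a substring of the text $y(x)$ (of length $k = 2n-2$). So the proof is essentially a reduction to string matching, plus a check that we never need to materialize $y(x)$ or $x^r$ explicitly.

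First I would set up the access model: note that any symbol $y(x)_i$ is obtained in $O(1)$ time via the function $\textsc{Get}(i)$ described just before the theorem (returning $x_{i+1}$ for $i<n-1$ and $x_{i-n+1}$ otherwise), and any symbol $(x^r)_j$ equals $x_{n-1-j}$, also $O(1)$ per access. Each such access costs a single query to the input oracle. Hence running a string-matching routine on pattern $x^r$ and text $y(x)$ costs, up to constant factors, the same number of oracle queries as the routine makes symbol accesses.

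For the classical bound, I would invoke the Knuth--Morris--Pratt algorithm on pattern $x^r$ and text $y(x)$: it runs in $O(k+n) = O(n)$ time, makes $O(n)$ symbol accesses, and decides the substring question deterministically; by Lemma~\ref{lm:substr} this decides $\texttt{REV}_n(x)$ exactly, giving query and time complexity $O(n)$. For the quantum bound, I would invoke the Ramesh--Vinay quantum pattern-matching algorithm with pattern length $n$ and text length $k = 2n-2$; its stated complexity $O\!\left(\sqrt{k}\log\sqrt{k/n}\,\log n + \sqrt{n}(\log n)^2\right)$ simplifies, since $k = \Theta(n)$ makes $\log\sqrt{k/n} = O(1)$, to $O\!\left(\sqrt{n}\log n + \sqrt{n}(\log n)^2\right) = O\!\left(\sqrt{n}(\log n)^2\right)$ queries. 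The Ramesh--Vinay algorithm has bounded (constant, one-sided) error; composing it with the exact reduction of Lemma~\ref{lm:substr} yields a two-sided error bounded away from $1/2$, as claimed (and the error can be driven down by $O(1)$-fold repetition if one wants it below any fixed constant, though this is not needed for the statement).

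The only real subtlety — and the one point I would be careful to spell out — is the simplification of the Ramesh--Vinay bound: one must observe that $k$ and $n$ differ only by a constant factor so the $\log\sqrt{k/n}$ factor collapses to a constant and $\sqrt{k} = O(\sqrt{n})$, after which the second term dominates. Everything else (the $O(1)$-time symbol access, the correctness coming straight from Lemma~\ref{lm:substr}, the error-probability bookkeeping) is routine. I do not expect any genuine obstacle here; the content of the general-case upper bound lives entirely in Lemma~\ref{lm:substr}, and the theorem is its corollary via off-the-shelf string-matching subroutines.
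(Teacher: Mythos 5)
Your proposal matches the paper's own proof: both reduce the problem via Lemma~\ref{lm:substr} to searching for $x^r$ in $y(x)$ (accessed implicitly in $O(1)$ per symbol), then apply Knuth--Morris--Pratt for the classical $O(n)$ bound and the Ramesh--Vinay algorithm for the quantum bound, simplifying its complexity using $k=2n-2=\Theta(n)$ to get $O\left(\sqrt{n}(\log n)^2\right)$ with error below $0.5$. The argument is correct and essentially identical to the paper's.
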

\begin{proof}
    Due to Lemma \ref{lm:substr}, the problem is equivalent to searching $x^r$ in $y(x)$. Let $k=|y(x)|=2n-2$, and $n=|x|=|x^r|$.
    
    In the classical case, time and query complexity of Knuth–Morris–Pratt algorithm \cite{cormen2001,kmp77} is $O(n+k)=O(n+2n-2)=O(n)$. 

    In the quantum case, the query complexity of the Ramesh-Vinay algorithm \cite{rv2003} is
    \[O\left(\sqrt{k}\log{\sqrt{\frac{k}{n}}}\log n + \sqrt{n}(\log n)^2\right)=\]\[
    O\left(\sqrt{2n-2}\log{\sqrt{\frac{2n-2}{n}}}\log n + \sqrt{n}(\log n)^2\right)=\]\[
    O\left(\sqrt{n}\log n + \sqrt{n}(\log n)^2\right)= O\left(\sqrt{n}(\log n)^2\right).\]
    The error probability is strictly less than $0.5$ due to \cite{rv2003}.
\Endproof
\end{proof}
\subsection{Lower Bound}
Let us present the lower bound for the $\texttt{REV}_n$ problem. In the next theorem, we show that the problem is at least as hard as an unstructured search among $n$ elements. 
\begin{theorem}
    The lower bound for quantum query complexity of $\texttt{REV}_n$ is $\Omega(\sqrt{n})$, and for classical (randomized or deterministic) query complexity is $\Omega(n)$. 
\end{theorem}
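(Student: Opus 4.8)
The plan is to reduce the unstructured search problem (OR of $n$ bits, equivalently distinguishing the all-zero string from a string of Hamming weight one) to the $\texttt{REV}_n$ problem. Recall that the quantum query complexity of $\mathrm{OR}$ on $N$ bits is $\Theta(\sqrt{N})$ and the randomized query complexity is $\Theta(N)$; so any reduction that turns an $N$-bit search instance into a $\texttt{REV}_m$ instance with $m = O(N)$, using only $O(1)$ queries of the $\texttt{REV}$-oracle per simulated query of the search-oracle, immediately yields the claimed lower bounds $\Omega(\sqrt{n})$ and $\Omega(n)$.

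First I would fix a convenient ``hard family'' of inputs. The natural candidate is to take a block of the string that is forced to be a palindrome-concatenation exactly when a hidden search bit is $0$, and is $\varepsilon$-far (or at least not) in $L_{REV}$ when the hidden bit is $1$ — but since this is the general (non-property-testing) theorem, I only need membership, not farness. A clean construction: embed the $N$ search variables $z_1,\dots,z_N$ into positions of a string $x$ of length $n = \Theta(N)$ so that $x \in L_{REV}$ iff $z = 0^N$. For instance, one can design $x$ to look like a fixed palindrome $w w^r$ followed by a block that encodes $z$ in such a way that the whole string decomposes as $uu^r vv^r$ precisely when $z$ has no $1$'s; flipping a single $z_j$ to $1$ destroys every possible split into two palindromes. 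Concretely, pick $x = 0^{a} \, 1 \, 0^{a}$-type anchors to pin down where the two palindrome blocks must begin and end, and place the $z_j$'s symmetrically so that a single flip breaks the required mirror symmetry at both candidate split points simultaneously. Verifying that no alternative decomposition exists is the delicate combinatorial check.

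The key steps, in order, are: (1) define the map $z \mapsto x(z)$ from $\{0,1\}^N$ to $\Sigma^n$ with $n = cN$ for a constant $c$; (2) show $x(0^N) \in L_{REV}$ by exhibiting the split $u,v$ explicitly; (3) show that if $z \neq 0^N$ then $x(z) \notin L_{REV}$, by arguing that for every position $i$ the prefix $(x_0,\dots,x_i)$ and suffix $(x_{i+1},\dots,x_{n-1})$ cannot both be palindromes — here Lemma~\ref{lm:substr} could even be used as an alternative route, by showing $x(z)^r$ fails to occur as a substring of $y(x(z))$; (4) observe that each symbol $x(z)_i$ is either a fixed constant or equal to some $z_j$, hence computable with at most one query to the search oracle, so an algorithm for $\texttt{REV}_n$ yields one for $\mathrm{OR}_N$ with the same query complexity up to a constant; (5) invoke $Q(\mathrm{OR}_N) = \Omega(\sqrt{N})$ and $R(\mathrm{OR}_N) = \Omega(N)$ and substitute $N = \Theta(n)$.

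I expect step (3) — proving that a single $1$ among the $z_j$'s rules out \emph{all} decompositions into two nonempty palindromes — to be the main obstacle, since $L_{REV}$ is fairly permissive (there are up to $n$ candidate split points, and each of the two halves has its own internal freedom). The construction must be rigid enough that the anchor symbols force the split point essentially uniquely and force the internal mirror axes of both halves, so that one flipped bit necessarily lands at an asymmetric position. A safe design is to make the ``interesting'' content appear in duplicated mirrored copies within each half, so that flipping $z_j$ breaks a palindrome condition no matter which of the few admissible splits is attempted; once the candidate splits are enumerated (a short finite case analysis driven by the positions of the anchor $1$'s), the contradiction in each case is immediate.
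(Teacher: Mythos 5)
Your overall strategy---a reduction from unstructured search (OR) with the polarity ``all-zero $\Rightarrow$ in $L_{REV}$, Hamming weight one $\Rightarrow$ not in $L_{REV}$''---is exactly the paper's strategy, but your proposal has a genuine gap precisely at the step you yourself flag as the obstacle: you never actually fix the map $z\mapsto x(z)$, and step (3), that a single flipped bit kills \emph{every} decomposition $uu^rvv^r$, is left as an acknowledged ``delicate combinatorial check.'' Without a concrete construction and that verification, the reduction does not exist yet. Worse, the anchor-based design you sketch ($0^a\,1\,0^a$-type markers, duplicated mirrored copies) is the kind of construction for which ruling out all $n$ candidate split points is genuinely fiddly, since the anchors themselves are palindromic and can be absorbed into either half in unintended ways.

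The missing idea that makes all of this evaporate is that you do not need any gadget at all: take $x(z)=z$ itself, with $n=2t+2$. The all-zero string is in $L_{REV}$ (take $u=(0)$, $v=0^t$), while a string $\gamma^i$ with a single $1$ cannot be written as $uu^rvv^r$ for any nonempty $u,v$: in such a word every index has a distinct mirror partner carrying the same symbol (index $k<2|u|$ pairs with $2|u|-1-k$, and index $k\geq 2|u|$ pairs with $2(2|u|+|v|)-1-k$... i.e.\ $4|u|+2|v|-1-k$), so every symbol value occurs an even number of times, and a lone $1$ has no partner. Hence distinguishing $0^n$ from the strings $\gamma^0,\dots,\gamma^{n-1}$ is exactly searching for a single $1$ among $n$ positions, and the $\Omega(\sqrt{n})$ quantum and $\Omega(n)$ classical bounds follow immediately from the lower bound for unstructured search \cite{bbbv1997}. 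This is the paper's proof; your steps (4)--(5) then go through verbatim with the identity embedding, and no case analysis over split points is ever needed.
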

\begin{proof}
Assume that $n=2t+2$ for some integer $t$.
Let us consider only inputs of two forms:
\begin{itemize}
    \item $\sigma=(0,\dots,0)$ is a $0$-string. Let $u=(0)$, and $v=(0,\dots,0)$. Here $|u|=1$, and $|v|=t$. In that case, $u=u^r$, and $v=v^r$. So, we can say that $\sigma=uu^rvv^r$, and $\texttt{REV}_n(\sigma)=1$.
    \item For $i\in\{0,\dots,n-1\}$, the string $\gamma^i=(0,\dots,0,1,0,\dots 0)$ has $1$ on the position $i$, and $0$s on other positions. There is only one position with a $1$-value, and it has not a symmetric pair. Therefore, $\texttt{REV}_n(\gamma^{i})=0$.
\end{itemize}
Distinguishing between two cases $\sigma$ and $\gamma^i$ is equivalent to searching $1$ among $n$ elements. 

Assume, that we have a quantum algorithm with quantum query complexity $o(\sqrt{n})$ or a classical algorithm with query complexity $o(n)$. Then, we can distinguish between two cases $\sigma$ and $\gamma^i$, and find $1$ among $n$ elements with proposed complexity. This claim contradicts the lower bound for unstructured search \cite{bbbv1997} that is $\Omega(\sqrt{n})$ in the quantum case, and $\Omega(n)$ in the classical case.  
\Endproof
\end{proof}

Finally, we see that the classical complexity for the problem is $\Theta(n)$, and the quantum complexity is $\Theta^*(\sqrt{n})$, where $\Theta^*$ hides logarithmic factors. So, we obtain an almost quadratic speed-up for this problem. 

\section{The Property Testing Case}\label{sec:algo}
In this section, we consider the $\texttt{REV}_{n,\varepsilon}$ problem.
Here, we use ideas from \cite{pr2003} paper that provides a randomized algorithm for the problem.

The classical upper bound \cite{pr2003} for the problem is $O\left(\frac{1}{\varepsilon}\sqrt{n}\log n\right)$, and the lower bound \cite{akns2001} is $\Omega(\sqrt{n})$. We can see, that in the property testing case, we have significant improvement. A situation simulation happens for quantum algorithms. 

Firstly, let us discuss some observations on properties of a word $x=(x_0,\dots,x_{n-1})=uu^rvv^r$ from the language $L_{REV}$.
Let us consider two indexes $i,j\in\{0,\dots,n-1\}$, and assume that $i<j$ without limiting the generality of the foregoing.

We say that they are in symmetric positions with respect to $uu^r$ if $i,j<2|u|$, and there is an integer $\delta\leq |u|$ such that $i+\delta=|u|-1$, and $j-\delta=|u|$. In other words, they are symmetric with respect to the middle of the palindrome $uu^r$ (See Figure \ref{fig:symu}).
\begin{figure}
    \centering
    \includegraphics[width=\textwidth]{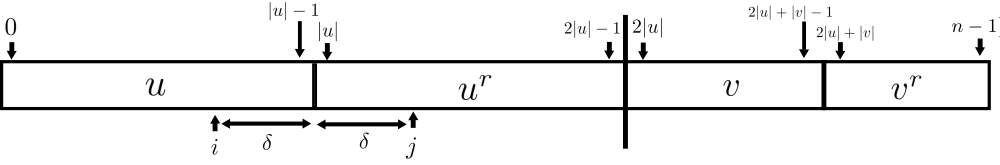}
    \caption{Indexes $i$ and $j$ are symmetric with respect to the middle of the palindrome $uu^r$.}
    \label{fig:symu}
\end{figure}

We say that indexes $i$ and $j$ are in symmetric positions with respect to $vv^r$ if $i,j\geq 2|u|$, and there is an integer $\delta'\leq |v|$ such that $i+\delta'= 2|u|+|v|-1$, and $j-\delta'=2|u|+|v|$. In other words, they are symmetric with respect to the middle of the palindrome $vv^r$ (See Figure \ref{fig:symv}).
\begin{figure}
    \centering
    \includegraphics[width=\textwidth]{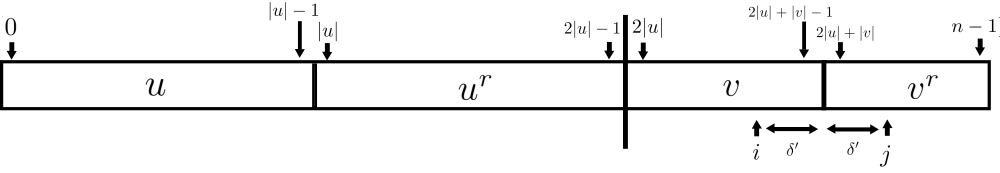}
    \caption{Indexes $i$ and $j$ are symmetric with respect to the middle of the palindrome $vv^r$.}
    \label{fig:symv}
\end{figure}

Let us consider any two indexes $i$ and $j$ that are in symmetric positions with respect to $uu^r$ or $vv^r$. In that case, we have the following lemma about these indexes:
\begin{lemma}\label{pr:symetric}
Indexes $i$ and $j$ are in symmetric positions with respect to $uu^r$ or $vv^r$ if and only if $i+j=2|u|-1$ $($mod $ n)$. Remember that $n=2|u|+2|v|$.
\end{lemma}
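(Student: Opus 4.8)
The plan is to verify the biconditional by direct computation in both directions, splitting into the two cases according to whether $i$ and $j$ are symmetric with respect to $uu^r$ or with respect to $vv^r$. The key observation is that "symmetric with respect to a palindrome occupying positions $[a, a+2\ell-1]$ with its center between $a+\ell-1$ and $a+\ell$" is algebraically the statement $i+j = 2a + 2\ell - 1$. So the whole lemma reduces to computing this sum of endpoints in each case and checking it equals $2|u|-1$ modulo $n$.

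For the forward direction, I would start from the definition. If $i,j$ are symmetric with respect to $uu^r$, there is $\delta \le |u|$ with $i + \delta = |u|-1$ and $j - \delta = |u|$; adding these two equations immediately gives $i + j = 2|u| - 1$, which trivially holds mod $n$ as well. If instead $i,j$ are symmetric with respect to $vv^r$, there is $\delta' \le |v|$ with $i + \delta' = 2|u| + |v| - 1$ and $j - \delta' = 2|u| + |v|$; adding gives $i + j = 4|u| + 2|v| - 1 = 2|u| - 1 + (2|u| + 2|v|) = 2|u| - 1 + n$, so $i + j \equiv 2|u| - 1 \pmod n$. This direction is essentially mechanical.

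For the converse, suppose $i + j \equiv 2|u| - 1 \pmod n$ with $0 \le i < j \le n-1$. Since $0 \le i + j \le 2n - 3$, the congruence forces either $i + j = 2|u| - 1$ or $i + j = 2|u| - 1 + n = 4|u| + 2|v| - 1$. In the first case I would set $\delta := |u| - 1 - i$; then $j = 2|u| - 1 - i = |u| + \delta$, so $i + \delta = |u| - 1$ and $j - \delta = |u|$, and one checks $\delta \ge 0$ (from $i < j$ one gets $i \le |u|-1$) and $\delta \le |u|$; also $i, j < 2|u|$ must be argued from $i \le |u|-1 < 2|u|$ and $j = |u| + \delta \le 2|u| - 1 < 2|u|$. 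In the second case I would symmetrically set $\delta' := 2|u| + |v| - 1 - i$ and verify the defining equations and the bounds $2|u| \le i < j$ and $\delta' \le |v|$.

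The main obstacle — really the only delicate point — is handling the range constraints carefully: the definition of "symmetric with respect to $uu^r$" carries the side conditions $i,j < 2|u|$ and $\delta \le |u|$ (and implicitly $\delta \ge 0$, i.e.\ $i \le |u|-1$), and likewise the $vv^r$ case carries $i,j \ge 2|u|$ and $\delta' \le |v|$. I would need to confirm that when $i + j = 2|u| - 1$ these forced inequalities are automatic given $0 \le i < j$, and that when $i + j = 4|u| + 2|v| - 1$ the constraint $i < j \le n-1$ forces $i \ge 2|u|$ (since $j \le n - 1 = 2|u| + 2|v| - 1$ gives $i = 4|u| + 2|v| - 1 - j \ge 2|u|$). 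Once these boundary checks are laid out, the argument closes; everything else is arithmetic.
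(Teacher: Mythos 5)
Your proposal is correct, and the computational core is the same as the paper's: the same additions of the defining equations for the forward direction, and the same choices $\delta=|u|-1-i$ and $\delta'=2|u|+|v|-1-i$ for the converse. The organization differs in one respect worth noting. The paper splits into three cases by the positions of $i,j$ relative to $2|u|$ (both below, both at or above, straddling), proving both directions inside the first two cases and devoting a separate Case~3 to showing that a straddling pair $i<2|u|\leq j$ can never satisfy $i+j\equiv 2|u|-1 \pmod n$. You instead split the converse by the value of $i+j$ (either $2|u|-1$ or $2|u|-1+n$, which the bound $0\le i+j\le 2n-3$ forces) and then \emph{derive} the positional constraints $i,j<2|u|$, respectively $2|u|\le i<j$, together with $0\le\delta\le|u|$ and $0\le\delta'\le|v|$, from that value and $i<j\le n-1$. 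This makes the straddling case vanish automatically, so your argument is slightly leaner and avoids the paper's Case~3 (whose inequalities, as printed, contain a small slip); the paper's position-based split, on the other hand, makes it more explicit why mixed pairs fail the congruence. Either way the boundary checks you flag as the delicate point are exactly the ones that need to be written out, and your sketches of them are correct.
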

\begin{proof}
Without limiting the generality of the foregoing, we can assume that $i<j$.
Note that $0\leq i+j\leq 2n$ because $i,j\leq n$. Therefore,  $i+j=2|u|-1$ $($mod $ n)$ means  either $i+j=2|u|-1$ or $i+j=2|u|-1+n$. We have three cases. Let us consider each of them.

\begin{itemize}
    \item[Case 1.] 
Assume that $i,j<2|u|$.
Let $i$ and $j$ be in symmetric positions with respect to $uu^r$. So, there is $\delta$ such that $i+\delta = |u|-1$ and $j-\delta=|u|$. Hence, $i+j=|u|-1-\delta + |u|+\delta=2|u|-1=2|u|-1$ $($mod $ n)$.

Let $i+j=2|u|-1$ $($mod $ n)$. Let $\delta=|u|-1-i$. Then \[i=|u|-\delta-1, \mbox{ and }j=2|u|-1-i=2|u|-1-|u|+1+\delta=|u|+\delta.\]
Due to the definition, $i$ and $j$ are in symmetric positions with respect to $uu^r$.

\item[Case 2.]
Assume that $i, j \geq 2|u|$. Let $i$ and $j$ be in symmetric positions with respect to $vv^r$. So, there is $\delta'$ such that $i+\delta' = 2|u|+|v|-1$ and $2|u|+|v|=j-\delta'$. Hence,
\[i+j=2|u|+|v|-1-\delta' + 2|u|+|v|+\delta'=2|u|-1 + (2|u|+2|v|)\mbox{ $($mod } n).\]
Note that $2|u|+2|v|=n$. Therefore, $i+j=2|u|-1$ $($mod $ n)$.

Let $i+j=2|u|-1$ $($mod $ n)$. Let $\delta'=2|u|+|v|-1-i$. We remember that $n=2|u|+2|v|$. Then
\[i=2|u|+|v|-1-\delta',\mbox{ and }j=2|u|-1-i \mbox{ $($mod }n)\]
\[j=n+2|u|-1-i \mbox{ $($mod }n)\]
\[j=n+2|u|-1-2|u|-|v|+1+\delta'=n-|v|+\delta'=2|u|+2|v|-|v|+ \delta' \mbox{ $($mod }n)\]\[j=2|u|+|v|+ \delta' \mbox{ $($mod }n).\]
Note that $2|u|+|v|+\delta'=2|u|+|v|+2|u|+|v|-1-i\leq 2|u|+|v|+2|u|+|v|-1-2|u|=2|u|+2|v|-1<n$. Therefore, $j< n$, and we can say that $j=2|u|+|v|+ \delta'$.
Due to the definition, $i$ and $j$ are in symmetric positions with respect to $vv^r$.

\item[Case 3.]
Assume that $i<2|u|$, and $j\geq 2|u|$. In that case, $i$ and $j$ are not in symmetric positions with respect to $uu^r$ nor $vv^r$. Let us show that $i+j\neq 2|u|-1 ($ mod $n$). 

Let $i<n-j-1$, then $i+j<n-1$.
It means that $(i+j)$ mod $n=i+j$. 
At the same time, $i+j\geq j\geq 2|u|>2|u|+1$. Therefore $i+j\neq 2|u|-1$.

Let $i\geq n-j-1$. It means that $(i+j)$ mod $n=i-(n-j)<i-1<2|u|-1$.
Therefore $(i+j)$ mod $n\neq 2|u|-1$.
\end{itemize}
\Endproof
\end{proof}

So, we can say that for any $i$ and $j$ such that $i+j=2|u|-1$ $($ mod $n)$, we have $x_i=x_j$. 

For an integer $p\in\{0,\dots,n-1\}$, let us look at two indexes $(i-p)\mbox{ mod }n$ and $(j+p)\mbox{ mod }n$ where $i+j=2|u|-1$ $($ mod $n)$. We have
\[(i-p)+(j+p)=i+j=2|u|-1 (\mbox{ mod }n)\]
Therefore, $x_{(i-p)\mbox{ mod }n}=x_{(j+p)\mbox{ mod }n}$.

Let us consider the number $2|u|-1$. Due to the integer division rule and the statement $2|u|-1<2|u|+2|v|=n $, we can say that 
\[2|u|-1 = \alpha \cdot \lfloor n^{1/3}\rfloor + \beta,\] 
where $0\leq \alpha \leq \left\lfloor \frac{2|u|-1}{\lfloor n^{1/3}\rfloor}\right\rfloor \leq \left\lfloor \frac{n}{\lfloor n^{1/3}\rfloor}\right\rfloor\approx n^{2/3}$ and $0\leq \beta \leq \lfloor n^{1/3}\rfloor-1$. 

Let us consider two sets of integers that are 
\[{\cal I}_n=\{0,\dots, \lfloor n^{1/3}\rfloor-1\}\]
and
\[{ \cal J}_n=\{\alpha \cdot \lfloor n^{1/3}\rfloor,\mbox{ for }0\leq \alpha \leq \left\lfloor \frac{2|u|-1}{\lfloor n^{1/3}\rfloor}\right\rfloor\}=\]\[
\{0,\lfloor n^{1/3}\rfloor,2\lfloor n^{1/3}\rfloor,3\lfloor n^{1/3}\rfloor, \dots, \left\lfloor \frac{2|u|-1}{\lfloor n^{1/3}\rfloor}\right\rfloor\cdot \lfloor n^{1/3}\rfloor\}.\]

Note that $|{ \cal J}_n|=\left\lfloor \frac{2|u|-1}{\lfloor n^{1/3}\rfloor}\right\rfloor \leq \left\lfloor \frac{n}{\lfloor n^{1/3}\rfloor}\right\rfloor\approx n^{2/3}$, and $|{\cal I}_n|=\lfloor n^{1/3}\rfloor$.

We are ready to present one more lemma about these indexes.
\begin{lemma}\label{pr:int-div}
If $x=(x_0,\dots,x_{n-1})\in L_{REV}$, then there is $i\in{\cal I}_n$ and $j\in{\cal J}_n$ such that $x_{(i-p)\mbox{ mod }n}=x_{(j+p)\mbox{ mod }n}$ for any $p\in\{0,\dots,n-1\}$. 
\end{lemma}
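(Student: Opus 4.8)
The plan is to reduce Lemma~\ref{pr:int-div} to the earlier observations via the decomposition $2|u|-1 = \alpha\cdot\lfloor n^{1/3}\rfloor + \beta$ that was set up just before the statement. Assume $x\in L_{REV}$, so $x = uu^rvv^r$ with $n = 2|u|+2|v|$. Set $j := \alpha\cdot\lfloor n^{1/3}\rfloor$ and $i := \beta$. By construction $0\le\beta\le\lfloor n^{1/3}\rfloor - 1$, so $i\in{\cal I}_n$; and $0\le\alpha\le\left\lfloor\frac{2|u|-1}{\lfloor n^{1/3}\rfloor}\right\rfloor$, so $j\in{\cal J}_n$. With this choice, $i+j = \beta + \alpha\cdot\lfloor n^{1/3}\rfloor = 2|u|-1$, hence trivially $i+j = 2|u|-1\ (\mathrm{mod}\ n)$.

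Next I would invoke the chain of facts already established in the excerpt. Lemma~\ref{pr:symetric} tells us that indexes with $i+j = 2|u|-1\ (\mathrm{mod}\ n)$ are in symmetric positions with respect to $uu^r$ or $vv^r$, and the sentence following its proof records that in either case $x_i = x_j$ for $x\in L_{REV}$. Then the ``shift'' observation immediately after that — namely that for any $p\in\{0,\dots,n-1\}$ one has $(i-p)+(j+p) = i+j = 2|u|-1\ (\mathrm{mod}\ n)$, and therefore $x_{(i-p)\bmod n} = x_{(j+p)\bmod n}$ — finishes the argument, since it applies verbatim to our chosen pair $(i,j)$.

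So the proof is essentially just: pick $i=\beta$, $j=\alpha\lfloor n^{1/3}\rfloor$, check membership in ${\cal I}_n$ and ${\cal J}_n$, check $i+j\equiv 2|u|-1$, and quote the shifted-symmetry equality. I do not expect a genuine obstacle here; the only mild subtlety is bookkeeping — making sure the bound on $\alpha$ used to define ${\cal J}_n$ is exactly the one coming from the integer-division decomposition of $2|u|-1$, so that the claimed $j$ really lies in ${\cal J}_n$, and making sure the ``$\bmod\ n$'' in the conclusion is harmless because $i+j = 2|u|-1 < n$ on the nose (no wraparound needed before shifting by $p$). Everything else is a direct appeal to Lemma~\ref{pr:symetric} and the displayed shift identity preceding the statement.
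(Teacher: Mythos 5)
Your proposal is correct and matches the paper's own proof essentially verbatim: the paper also sets $i=\beta$ and $j=\alpha\cdot\lfloor n^{1/3}\rfloor$ from the decomposition $2|u|-1=\alpha\cdot\lfloor n^{1/3}\rfloor+\beta$, notes $i+j=2|u|-1$, and concludes via Lemma~\ref{pr:symetric} together with the shift observation that $x_{(i-p)\bmod n}=x_{(j+p)\bmod n}$ for all $p$. No gaps; nothing further is needed.
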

\begin{proof}
If $x=(x_0,\dots,x_{n-1})\in L_{REV}$, then there is $u$ and $v$ from $\Sigma^{*}$ such that $x=uu^rvv^r$.

As we discussed before, $2|u|-1=\alpha \cdot \lfloor n^{1/3}\rfloor + \beta$, where $0\leq \alpha \leq \left\lfloor \frac{2|u|-1}{\lfloor n^{1/3}\rfloor}\right\rfloor$ and $0\leq \beta \leq \lfloor n^{1/3}\rfloor -1$. Let $i=\beta$ and $j=\alpha \cdot \lfloor n^{1/3}\rfloor$. 

Therefore, $i+j=2|u|-1$ and they are in symmetric positions with respect to $uu^r$ or $vv^r$ due to Lemma \ref{pr:symetric}. Hence, for any $p\in\{0,\dots,n-1\}$ we have $x_{(i-p)\mbox{ mod }n}=x_{(j+p)\mbox{ mod }n}$.
\Endproof
\end{proof}

We are ready to formulate the algorithm.

\subsection{Quantum Algorithm}
Let us present an algorithm for computing $\texttt{REV}_{n,\varepsilon}(x)$. The algorithm is based on the meet-in-the-middle technique \cite{k2022lecturenotes}(Section 8) that is widely used in algorithms design and cryptography \cite{dh77}. The main idea is to split a large set into two small parts, small enough for handling them. Similar ideas were used, for example, in \cite{bht97collarx,as2003}.

Let us consider the sets ${\cal I}_n$ and ${\cal J}_n$, and an integer $m=\frac{2}{\varepsilon}\log_2 n$.
\begin{itemize}
    \item[] {\bf Step 1.} We choose randomly $m$ numbers $p_1,\dots,p_m\in_R\{0,\dots,n-1\}$
    \item[] {\bf Step 2.} We add all strings $x^i=(x_{(i-p_1)\mbox{ mod }n},\dots,x_{(i-p_m)\mbox{ mod }n})$ for $i\in {\cal I}_n$ to a {\em trie} ( prefix tree) $T$.
    \item[] {\bf Step 3.} We search $j\in {\cal J}_n$ such that $\tilde{x}^j=(x_{(j+p_1)\mbox{ mod }n},\dots,x_{(j+p_m)\mbox{ mod }n})$ is presented in the {\em trie} $T$. We search them using Grover's search algorithm \cite{g96,bbht98}. We define a search function $f:{\cal J}_n\to \{0,1\}$ such that $f(j)=1$ iff $\tilde{x}^j$ is presented in $T$. The algorithm searches any $j_0$ such that $f(j_0)=1$.
\end{itemize}

If we found an index $j_0$ on Step 3, then there is $\tilde{x}^j=x^i$ and $i+j=2|u|-1$. Therefore, $\texttt{REV}_{n,\varepsilon}(x)=1$.

Assume that we have a $\textsc{GroverSearch}({\cal D}, f)$ procedure, that implements Grover's search algorithm for search space ${\cal D}$ and a function $f:{\cal D}\to \{0,1\}$. The algorithm finds $j_0\in{\cal D}$ such that $f(j_0)=1$. The algorithm works with $O(\sqrt{|{\cal D}|}\cdot T(f))$ query complexity, where $T(f)$ is the complexity of computing the function $f$. The error probability is at most $0.1$.
Assume that the procedure returns $True$ in the case of finding the element $j_0$ and $False$ otherwise. In our algorithm we use $\textsc{Contains}(T,\tilde{x}^j)$ function as $f(j)$. This function checks whether $\tilde{x}^j$ belongs to the trie $T$. The query complexity of the function is $O(m)$ due to properties of the trie data structure that were discussed in Section \ref{sec:prelims}. The main operations with the trie data structure are listed in Section \ref{sec:prelims}. 

The implementation of the algorithm is presented in Algorithm \ref{alg:main}; the complexity is analyzed in Theorem \ref{th:compl}.

\begin{algorithm}
\caption{The Quantum Algorithm for $\texttt{REV}_{n,\varepsilon}$ problem and input $x=(x_0,\dots,x_{n-1})$.}\label{alg:main}
\begin{algorithmic}
\State $m \gets \frac{2}{\varepsilon}\log_2 n$
\For{$i \in \{1,\dots,m\}$}
\State $p_i\gets_R\{0,\dots,n-1\}$\Comment{$p_i$ is chosen randomly}
\EndFor
\State $T\gets\textsc{InitTrie}()$ 
\For{$i \in \{1,\dots,\lfloor n^{1/3}\rfloor-1\}$}
\State  $x^i\gets (x_{(i-p_1)\mbox{ mod }n},\dots,x_{(i-p_m)\mbox{ mod }n})$
\State $\textsc{AddToTrie}(T,x^j)$
\EndFor

\State $Result\gets \textsc{GroverSearch}({\cal J}_n, \textsc{Contains}(T,\tilde{x}^j))$
\State \Return $Result$
\end{algorithmic}
\end{algorithm}

\begin{theorem}\label{th:compl}
The provided algorithm computes $\texttt{REV}_{n,\varepsilon}$ with $O\left(\frac{1}{\varepsilon}n^{1/3}\log n\right)$ query complexity and at most $\frac{1}{4}$ error probability.  \end{theorem}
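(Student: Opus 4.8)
The plan is to establish the two halves of the statement separately: first the query complexity, then the error bound. For the query complexity, I would simply track the cost of each of the three steps of Algorithm~\ref{alg:main}. Step~1 is free of queries (it only draws random numbers $p_1,\dots,p_m$). Step~2 builds the trie $T$: for each of the $|{\cal I}_n| = \lfloor n^{1/3}\rfloor$ indices $i$ we read $m$ input symbols to form $x^i$ and then call \textsc{AddToTrie} whose cost is $O(m)$; this totals $O(n^{1/3}\cdot m) = O\!\left(\frac{1}{\varepsilon}n^{1/3}\log n\right)$ queries. Step~3 runs \textsc{GroverSearch} over ${\cal J}_n$ with $|{\cal J}_n| \le \lfloor n/\lfloor n^{1/3}\rfloor\rfloor \approx n^{2/3}$, and each evaluation of the search predicate $f(j) = \textsc{Contains}(T,\tilde x^j)$ costs $O(m)$ queries (to read the $m$ symbols of $\tilde x^j$; the trie lookup itself is query-free). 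Hence Step~3 costs $O(\sqrt{|{\cal J}_n|}\cdot m) = O\!\left(n^{1/3}\cdot\frac{1}{\varepsilon}\log n\right)$ queries, and the total is $O\!\left(\frac{1}{\varepsilon}n^{1/3}\log n\right)$, matching the claim.

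For correctness on positive instances (no error), I would argue that if $x \in L_{REV}$ then the algorithm accepts with probability at least $1-0.1$, the Grover error. By Lemma~\ref{pr:int-div}, when $x \in L_{REV}$ there exist $i^\star \in {\cal I}_n$ and $j^\star \in {\cal J}_n$ with $x_{(i^\star-p)\bmod n} = x_{(j^\star+p)\bmod n}$ for every $p$; in particular this holds for $p = p_1,\dots,p_m$, so $x^{i^\star} = \tilde x^{j^\star}$ as strings, $x^{i^\star}$ was added to $T$ in Step~2, and therefore $f(j^\star) = 1$. Thus \textsc{GroverSearch} is searching a nonempty marked set and returns $True$ except with probability at most $0.1 < \frac14$.

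The more delicate direction is the soundness analysis: if $x$ is $\varepsilon n$-far from $L_{REV}$, the algorithm must reject with probability at least $\frac34$. The algorithm (wrongly) accepts only if some $j \in {\cal J}_n$ has $\tilde x^j$ present in $T$, i.e. $\tilde x^j = x^i$ for some $i \in {\cal I}_n$. This is exactly the event that for the particular offset $2|u'|-1 := i+j$ (taken mod $n$) the sampled positions failed to witness any mismatch, where $u'$ would be the prefix that makes $x$ look like a word in $L_{REV}$. Following \cite{pr2003}: for a fixed target ``split value'' $s = i+j \bmod n$, say it is \emph{bad} if the corresponding candidate word (the one obtained by symmetrizing $x$ around the two implied palindrome centers) differs from $x$ in at least $\varepsilon n$ positions; since $x$ is $\varepsilon n$-far from all of $L_{REV}$, every split value is bad. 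For a bad split value, a single random offset $p$ exposes a mismatch ($x_{(i-p)\bmod n} \ne x_{(j+p)\bmod n}$) with probability at least $\varepsilon$ — here one must be a little careful, because a mismatch between $x$ and the candidate word at position $q$ forces a discrepancy in the pair $\{(i-p)\bmod n,(j+p)\bmod n\}$ for the $p$ with $(i-p)\equiv q$, so among the $n$ choices of $p$ at least $\varepsilon n$ of them expose a mismatch. Therefore $\tilde x^j$ survives all $m$ independent samples with probability at most $(1-\varepsilon)^m \le e^{-\varepsilon m} = e^{-2\log_2 n} = n^{-2/\ln 2} \le n^{-2}$. There are at most $|{\cal I}_n|\cdot|{\cal J}_n| \le n^{1/3}\cdot n^{2/3} = n$ pairs $(i,j)$, hence at most $n$ bad split values to worry about, so by a union bound the probability that any one of them survives is at most $n\cdot n^{-2} = n^{-1} \le \frac{1}{10}$ for $n$ large enough. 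Combining with the Grover error of $0.1$ in Step~3 gives total error at most $\frac{1}{10} + \frac{1}{10} < \frac14$.

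I expect the main obstacle to be precisely the union-bound/sampling argument in the soundness case: one must pin down exactly which ``candidate word'' a pair $(i,j)$ (equivalently, a split value $2|u|-1 \bmod n$) corresponds to, verify via Lemma~\ref{pr:symetric} that the symmetric-pair structure forced by that split value really is the structure of a word in $L_{REV}$, and confirm that $\varepsilon n$-farness of $x$ translates into at least $\varepsilon n$ ``revealing'' offsets $p$ out of $n$ for that split value — and also double-check the constant bookkeeping (that $\frac{2}{\varepsilon}\log_2 n$ samples suffice to beat the $\le n$ union bound with room to spare, and that Grover can be run with error $\le 0.1$ even though the number of marked elements is unknown, via \cite{bbht98}). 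The rest is routine.
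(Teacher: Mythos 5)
Your complexity accounting and your completeness argument are correct and essentially identical to the paper's: Step~2 costs $O(|{\cal I}_n|\cdot m)$ queries, \textsc{GroverSearch} costs $O(\sqrt{|{\cal J}_n|}\cdot m)$ with each predicate evaluation reading the $m$ symbols of $\tilde x^j$, and Lemma~\ref{pr:int-div} guarantees a marked $j$ whenever $x\in L_{REV}$. Your soundness skeleton is also the paper's: a per-pair survival bound $(1-\varepsilon)^m\le n^{-2}$ followed by a union bound over the at most $n^{1/3}\cdot n^{2/3}=n$ pairs (the paper writes this as $(1-n^{-2})^{n}>\frac34$), then adding the Grover error.

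The step you yourself flag as delicate is, however, a genuine gap, and it cannot be closed the way you sketch it. You assert that ``every split value is bad'' because $x$ is $\varepsilon n$-far from $L_{REV}$; this presupposes that for \emph{every} pair $(i,j)\in{\cal I}_n\times{\cal J}_n$ the word obtained by symmetrizing $x$ with respect to $s=(i+j)\bmod n$ is a word of $L_{REV}$. By Lemma~\ref{pr:symetric} that is true only when $s$ is odd and $1\le s\le n-3$ (so that $s=2|u|-1$ with $u,v$ nonempty), but the search space contains pairs with even sum --- e.g.\ $i=j=0$, since $0\in{\cal I}_n$ and $0\in{\cal J}_n$ --- and pairs with $s=n-1$. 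For such pairs, $\varepsilon n$-farness from $L_{REV}$ gives no mismatch guarantee whatsoever, so the claimed per-sample match probability $1-\varepsilon$ can fail outright. Concretely, take $x$ with $x_p=x_{(n-p)\bmod n}$ for all $p$ (a cyclic palindrome centred at position $0$) whose roughly $n/2$ free symbols are random: with high probability each odd split has about $n/4$ mismatched symmetric pairs, so $x$ is $\Omega(n)$-far from $L_{REV}$ (hence satisfies the promise for, say, $\varepsilon\le 1/8$), yet $x^0=\tilde x^0$ for every choice of $p_1,\dots,p_m$, so $f(0)=1$ deterministically and the algorithm accepts with Grover's success probability. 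Thus your union bound sums, for some pairs, probabilities that are in fact $1$. In fairness, the paper's own proof has exactly the same hole --- it simply asserts the $(1-\varepsilon)$ per-sample bound for every fixed pair $(i,j)$ --- so you have reproduced its argument together with its weak point; repairing it requires modifying the algorithm or analysis, e.g.\ restricting the search to pairs whose sum is odd and lies in $\{1,\dots,n-3\}$ (say via parity-separated tries) before running the union bound, or adding a verification stage for the split value that the search returns.
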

\begin{proof}
Let us discuss complexity of the algorithm. Step 2 which is adding all strings $x^i$ requires $O(|{\cal I}_n|\cdot m)=O(n^{1/3}\cdot m)$ query complexity. Grover search works with $O(|{\cal J}_n|\cdot m)=O(\sqrt{n^{2/3}}\cdot m)=O(n^{1/3}\cdot m)$ query complexity. Note that $m=\frac{2}{\varepsilon}\log n$, that is why the total complexity is 
\[O\left(\frac{1}{\varepsilon}n^{1/3}\log n + \frac{1}{\varepsilon}n^{1/3}\log n\right)=O\left(\frac{1}{\varepsilon}n^{1/3}\log n\right)\]

As we discuss in Lemma \ref{pr:int-div}, if $x\in L_{REV}$, then there are $i\in {\cal I}_n$ and $j\in {\cal J}_n$ such that $x^i=\tilde{x}^j$ for any choice of $(p_1,\dots,p_m)$. Therefore, the algorithm finds the required $i$ and $j$ and returns the correct answer with an error probability at most $0.1$  because of the error probability for Grover's search algorithm.

Assume that $x$ is $\varepsilon\cdot n$ far from any word from the language $L_{REV}$.
In other words, $\texttt{REV}_{n,\varepsilon}(x)=0$. Let us show that with high probability we cannot find the $i$ and $j$ indexes.

For fixed $j$ and $i$, the probability of obtaining a position $k$ of equal symbols, $x^i_k=\tilde{x}^j_k$, is $(1-\varepsilon)$. The total error probability that is the probability of obtaining all $m$ equal positions is 
\[(1-\varepsilon)^m=(1-\varepsilon)^{\frac{2}{\varepsilon}\log_2 n}=\left((1-\varepsilon)^{\frac{1}{\varepsilon}}\right)^{2\log_2 n}<\left(\frac{1}{2}\right)^{2\log_2 n}=\left(2^{\log_2 n}\right)^{-2}=n^{-2}.\] 
Here $(1-\varepsilon)^{\frac{1}{\varepsilon}}<\frac{1}{2}$ for any $\varepsilon$ such that $0<\varepsilon< 1$.

For all pairs of $i$ and $j$, the probability of success on all pairs of strings is $(1-n^{-2})^{n^{1/3}\cdot n^{2/3}}>\frac{3}{4}$ for  $n\geq 4$  because $\lim_{n\to\infty}(1-n^{-2})^n=1$. So, with probability $\frac{3}{4}$ all elements in the search space of Grover's Search are $0$. If the whole input contains only zeroes, then Grover's search algorithm returns $0$ with probability $1$. So, the Total success probability in that case is at least $\frac{3}{4}$. The error probability is at most $\frac{1}{4}$.
\Endproof
\end{proof}
\section{Conclusion}\label{sec:concl}
In this paper, we present a quantum property testing algorithm for recognizing the context-free  language $L_{REV}$ that has $O(\frac{1}{\varepsilon}n^{1/3}\log n)$ query complexity. It is better than classical counterparts that have $\Theta^*(\sqrt{n})$ query complexity. At the same time, we do not know a quantum lower bound in the property testing setting. We have a feeling that it is $\Omega(n^{1/3})$.

At the same time, in the general setting, the picture is almost clear. Classical query complexity is $\Theta(n)$, and quantum query complexity is $\Theta^*(\sqrt{n})$. So, we almost obtain quadratic speed-up. The open question is to develop a quantum algorithm that reaches the lower bound without additional log factors.

The third open question is to investigate other context-free languages for quantum property testing.

{\bf Acknowledgements.} We thank Frédéric Magniez and Yixin Shen for useful discussions.

This paper has been supported by the Kazan Federal University Strategic Academic Leadership Program ("PRIORITY-2030").

\bibliographystyle{splncs03}
\bibliography{tcs}

\end{document}